\documentclass[aps,prx,superscriptaddress,floatfix,10pt,twocolumn]{revtex4-2}
\pdfoutput=1
\usepackage[utf8]{inputenc} 

\usepackage{newtxtext} 
\usepackage{amssymb,amsfonts,mathtools}
\usepackage{bm,dsfont,physics}
\usepackage{amsthm}
\usepackage{nicefrac}

\usepackage{multirow}

\usepackage{quantikz}

\usepackage{graphicx}
\usepackage[caption=false]{subfig} 
\usepackage{overpic}
\usepackage{float}
\usepackage{color}
\usepackage{multirow}
\usepackage{booktabs}
\usepackage{rotating}

\usepackage{latexsym,verbatim,outlines}
\usepackage[normalem]{ulem}
\usepackage{lipsum}

\usepackage{amsthm}
    \newtheorem{theorem}{Theorem}

\usepackage[dvipsnames]{xcolor}
\PassOptionsToPackage{dvipsnames}{xcolor}
\usepackage[breaklinks=true,colorlinks,citecolor=Violet,linkcolor=Violet,urlcolor=Violet]{hyperref}

\footnotetext{These authors contributed equally.}

\begin{document}
\title{
Quantum error correction with global control
}

\author{Roberto Menta$^{\S}$}
\email{rmenta@planckian.co}
\affiliation{Planckian srl, I-56127 Pisa, Italy}
\affiliation{NEST, Scuola Normale Superiore, I-56127 Pisa, Italy}
\author{Lindsay Bassman Oftelie$^{\S}$}
\email{loftelie@planckian.co}
\affiliation{Planckian srl, I-56127 Pisa, Italy}
\author{Ashkan Abedi}
\affiliation{Planckian srl, I-56127 Pisa, Italy}
\author{Francesco Cioni}
\affiliation{NEST, Scuola Normale Superiore, I-56127 Pisa, Italy}
\author{Marco Polini}
\affiliation{Planckian srl, I-56127 Pisa, Italy}
%
\author{Seth Lloyd}
\affiliation{Planckian srl, I-56127 Pisa, Italy}
\affiliation{Department of Mechanical Engineering, Massachusetts Institute of Technology, Cambridge, MA-02139, USA}
\author{Francesco Caravelli}
\affiliation{Planckian srl, I-56127 Pisa, Italy}
\author{Vittorio Giovannetti}
\affiliation{Planckian srl, I-56127 Pisa, Italy}

\begin{abstract}
Reaching fault tolerance means scaling qubit counts by orders of magnitude, a jump that conventional superconducting architectures cannot sustain without solving the so-called `wiring problem'.
Global control sidesteps this bottleneck, but implementing quantum error correction (QEC) on previously proposed global architectures incurs extremely steep overhead costs, due to the need for separate correction procedures for the computational and auxiliary qubits that comprise the global device. 
We resolve this by introducing the first globally-controlled architecture with zero qubit overhead.  Every physical qubit is a computational qubit, and thus, every qubit is protected under a single error correcting scheme. We identify a class of cyclic stabilizer codes realizable through global iSWAP and single-qubit gates, yielding QEC thresholds nearly seven orders of magnitude larger than previous estimates for globally-controlled arrays. We further show these thresholds improve systematically as the global architecture is augmented with a limited amount of local measurement sites, demonstrating a trade-off between wiring simplicity and fault-tolerant performance.
\end{abstract}

\maketitle

Quantum computing has emerged as a powerful paradigm for solving a range of problems that are intractable on classical computers, from quantum simulation to optimization and cryptography~\cite{Huang2025, Babbush2025}. Yet, despite rapid progress, current devices remain limited by noise, finite coherence times, and restricted qubit counts, defining the so-called NISQ era~\cite{Preskill2018}. Thanks to substantial technological advances and sustained global investment, quantum processors are now beginning to move beyond this regime~\cite{Preskill2025, Eisert2025}. As systems scale toward hundreds or thousands of qubits, however, hardware architecture and control complexity become central bottlenecks~\cite{Mohseni2024}.

\begin{figure}[!t]
    \centering
    \includegraphics[width=\linewidth]{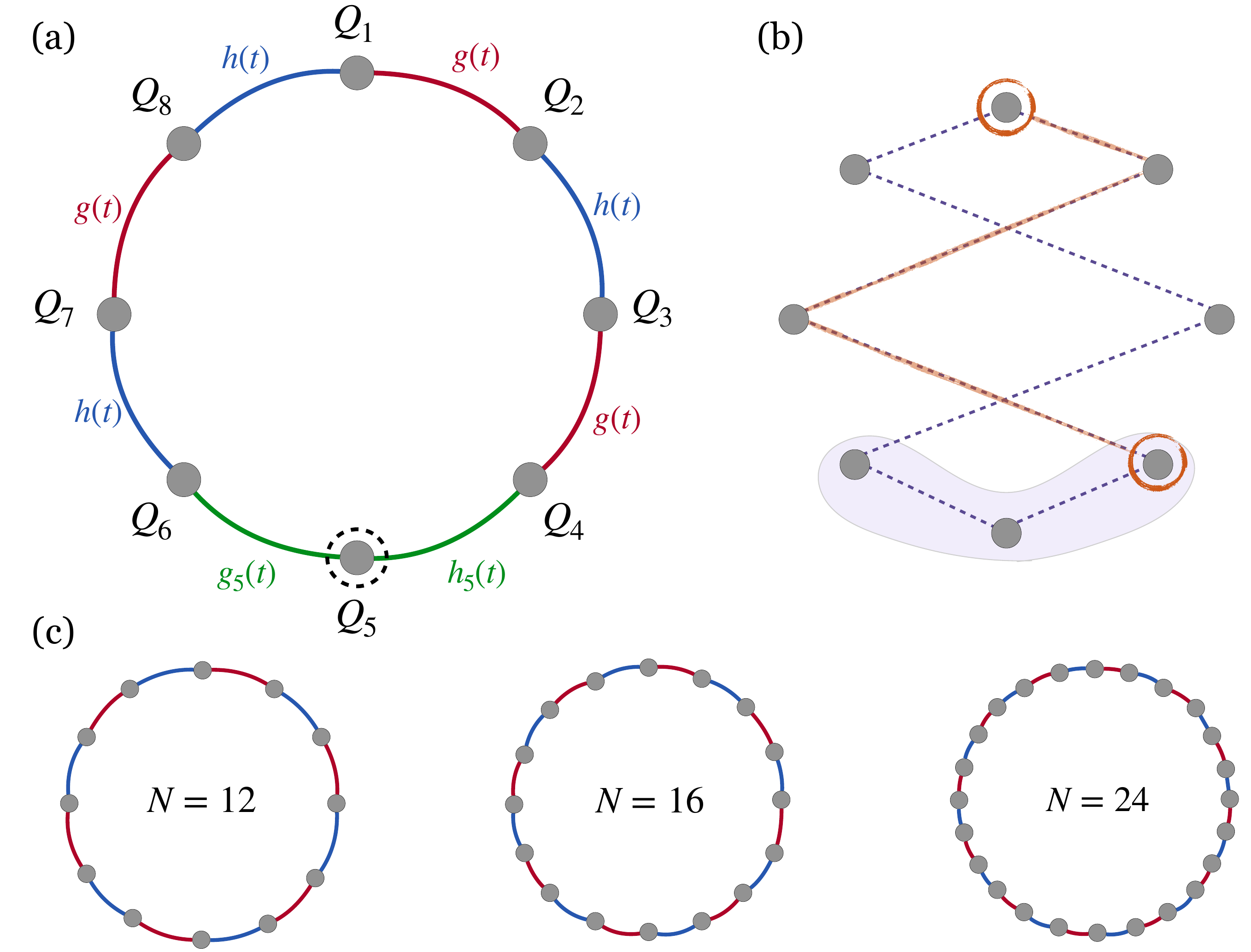}
    \caption{Globally controlled quantum architecture with zero qubit overhead.
    (a) The architecture possesses a ring topology with $N$ (even) computational qubits $\mathcal{Q}_i$
    (gray circles), here $N=8$, arranged on a circle with XY nearest-neighbor couplings.
    Red (blue) links denote the global coupling $g(t)$ [$h(t)$], acting uniformly on the odd (even) bonds; even- and odd-indexed qubits are addressed by two separate global drives. Green links denote the two
    locally controlled couplings $g_{\bar{i}}(t)$ and $h_{\bar{i}}(t)$ on the
    bonds adjacent to the single locally addressable site $\bar{i}$ (dashed circle;
    $\bar{i} = 5$ in the figure), where single-qubit gates are also applied.
    Computational two-qubit gates are performed on these two bonds.
    (b) Effective connectivity graph of the architecture for $N=8$. Vertices are the
    computational qubits; an edge links two qubits that the global transport dynamics can bring simultaneously into the locally controlled region (shaded). The graph is
    connected, which, together with the local single- and two-qubit gates at $\bar{i}$, is sufficient for universal quantum computation. Since every vertex has degree two, an entangling gate
    between qubits not joined by an edge must be compiled as a chain of
    operations along a path of the graph; the orange path shows the shortest such chain for the highlighted pair $Q_1$ and $Q_4$, of length three.
    (c) Larger rings, with $N=12$, $16$, and $24$. The number of independent controls (two global couplings plus local controls at the single addressable site) remains constant in all cases, independent of $N$.}
    \label{fig:ring-int}
\end{figure}

In leading platforms such as superconducting circuits, each qubit typically requires several dedicated control and readout lines~\cite{Kjaergaard2020}. This results in severe wiring congestion, substantial control complexity, increased thermal load at cryogenic stages, intricate calibration protocols, and strong geometric constraints on chip layout~\cite{Marvin2026}. As the number of qubits grows~\cite{Google2023, Google2025}, these challenges worsen superlinearly, giving rise to the so-called `wiring problem' and raising fundamental questions about the long-term scalability of architectures based on fully local addressability.

Global control~\cite{Lloyd1993} has emerged as a compelling alternative to conventional local control. In globally-driven architectures, only a few external fields act simultaneously on the entire qubit register~\cite{Benjamin2000}. Early theoretical work established that universal quantum computation is still attainable under such global operations~\cite{LloydQAOA}, provided that suitable structural asymmetries are present~\cite{Lloyd1993, Benjamin2000, benjamin_2001, Levy_2002, benjamin_2003, benjamin-bose_2004, Fitzsimons_2006, Silva_2009}. More recent studies have demonstrated versions of global control adapted to realistic hardware models~\cite{cesa2023universal, Patomaki2024, menta2024globally, cioni2024conveyorbelt, menta2025building}. 

However, such schemes rely on ingredients such as multiple qubit species and patterned energy splittings ~\cite{menta2026actuators} which present significant fabrication challenges, undermining the simplicity that global control seeks to provide. Worse, all previously presented global control architectures require a fraction of the physical qubits to be devoted to non-computational infrastructure used to mediate addressing without local control.  These auxiliary qubits, which are distinguished from the computational qubits that store the quantum information, present a major challenge for implementing quantum error correction (QEC). While standard QEC schemes can be mapped to global architectures to stabilize the computational qubits, they leave the auxiliary qubits unprotected. To address this, previous works have proposed running two separate, interleaved error correction procedures, one for the computational qubits and one for the auxiliary qubits \cite{Bririd_2004, Kay_2005, Kay_2007,Fitzsimons_2008}. This, however, generates massive temporal and spatial costs as the procedures cannot be performed in parallel (forcing the device to cycle between the two) and each requires its own qubit redundancy. It remains unclear if such methods could even be effective within realistic hardware and timing constraints. 

Here, we present a scalable, global-control scheme that requires no intricate species engineering and, for the first time, \textit{zero} auxiliary qubit overhead.  Our model consists of a homogeneous ring of qubits coupled via nearest-neighbor exchange interactions whose strengths can be globally switched between two configurations ~\cite{benjamin_2001, Wu2004, Ye2006}, see Fig.~\ref{fig:ring-int}. By alternating these global coupling patterns, the system naturally generates a programmable sequence of pairwise exchange operations, enabling quantum information to be transported around the ring purely through global dynamics. Universality is obtained by incorporating a single locally driven qubit, which provides the minimal symmetry breaking necessary to implement arbitrary single-qubit rotations~\cite{Hu2025, Gargiulo2026}. Any computational qubit can be routed to this site via global transport dynamics, where local gates are applied. Two-qubit entangling gates are generated using the same mechanism: information transport brings the relevant qubits into the locally controlled region, where controlled interactions are executed before the qubits are returned to their original locations. 

Using only a single qubit species, one locally controlled element, and four global control channels (two couplings and two drives), our architecture represents the most resource-efficient global-control scheme proposed to date---Fig.~\ref{fig:ring-int}(c) displays the architecture as the number of qubits is varied. Such simplicity makes it directly compatible with superconducting-qubit platforms featuring tunable couplers and flux control~\cite{benjamin_2003, benjamin-bose_2004, Kounalakis_2018, Yan2018, Mitarai2025}, offering a realistic avenue for experimental implementation. Moreover, the lack of auxiliary qubit infrastructure allows error correction to be implemented uniformly across the whole device, yielding a much cleaner realization of globally-controlled fault tolerance. 

We identify a stabilizer code with syndrome extraction circuits that are natively comprised of global gates, which we refer to as the cyclic code.  Initially described for a 5-qubit encoding~\cite{Schuch2003}, the code was recently adapted for execution on a ring topology for higher code distances~\cite{Simakov2025}, making it well-matched for our proposed architecture.  We perform Monte Carlo simulations to estimate QEC thresholds for the cyclic code on our architecture.  The dramatic simplification of error correction in our architecture is exemplified by the thresholds of $\sim 10^{-3}$ we obtain, nearly seven orders of magnitude larger than thresholds previously derived for dual computational-auxiliary global control schemes~\cite{Kay_2007}. While our minimal model features a single local measurement site, we demonstrate how distributing additional measurement sites uniformly around the ring can improve QEC performance, exposing an interesting trade-off between wiring simplicity and fault-tolerant performance~\cite{Menta2026ETH}.

\textit{The model.}---Consider an even number $N$ of  interacting qubits $\mathcal{Q}_1, 
\cdots, \mathcal{Q}_{N}$ arranged in a closed ring as shown in Fig.~\ref{fig:ring-int}(a) for the case of $N = 8$.  The Hamiltonian (setting $\hbar = 1$) is  
\begin{equation}
    \label{eq:H-tot1}
    \hat{H}(t) := \hat{H}_0 + \hat{H}_{{\rm drive}}(t) + \hat{H}_{{\rm int}}(t) \ ,
\end{equation}
where $\hat{H}_0:= \sum_{i =1}^N \frac{\omega_{\mathcal{Q}}}{2} \hat{\sigma}_i^{(z)}$
represents the local free energies of qubits, which we assume to  share the same frequency $\omega_{\mathcal{Q}}$, and $\hat{\sigma}^{(x,y,z)}_i$ are the Pauli operators acting on the $i$th site.
The second term in Eq.~(\ref{eq:H-tot1}) represents a driving contribution, which consists of local terms activated by dedicated global control lines. It comprises three components: 
$\hat{H}_{\rm drive}(t) := \hat{H}_{{\rm drive},{\rm E}}(t) + \hat{H}_{{\rm drive},{\rm O}}(t) + \hat{H}_{{\rm drive},\bar{i}}(t)$. The first two
\begin{align}
    \label{eq:odd-even}
    \hat{H}_{{\rm drive},{\rm E/O}}(t) := \Gamma_{\rm E/O}(t)  \sum_{i\in {\rm E/O}} \hat{\sigma}_{{i}}^{(y)}
\end{align}
act on the even- (E) and odd-indexed (O) qubit subsets, respectively, through two independent global pulses $\Gamma_{\rm E/O}(t) := \Omega_{{\rm E/O}}(t)\sin(\omega_{\mathrm{d}}t + \phi_{{\rm E/O}}(t))$. The third component addresses a single qubit of the ring, $\mathcal{Q}_{\bar{i}}$, ($\bar{i}=5$ in Fig.~\ref{fig:ring-int}(a)), which is one of the few elements of the model requiring local control,
\begin{align}
    \hat{H}_{{\rm drive},\bar{i}}(t) := \Gamma_{\bar{i}}(t)  \hat{\sigma}_{\bar{i}}^{(y)}
\end{align}
through the pulse $\Gamma_{\bar{i}}(t) := \Omega_{\bar{i}}(t)\sin(\omega_{\mathrm{d}}t + \phi_{\bar{i}}(t))$.

The last term in Eq.~(\ref{eq:H-tot1}) describes nearest neighboring  Heisenberg XY interactions $\hat{V}_{i,i+1}:= \frac{1}{2}(\hat{\sigma}^{(x)}_{i} \hat{\sigma}^{(x)}_{i+1} + \hat{\sigma}^{(y)}_{i} \hat{\sigma}^{(y)}_{i+1})$. Specifically, we assume  
\begin{equation}
\hat{H}_{{\rm int}}(t):= \hat{H}_{{\rm int},\mathcal{P}_{\rm E}}(t)  + \hat{H}_{{\rm int},\mathcal{P}_{\rm O}}(t) + \hat{H}_{{\rm int},{\bar i}^{+}}(t)+\hat{H}_{{\rm int},{\bar i}^{-}}(t)\ .
\end{equation}
Here, $\hat{H}_{{\rm int},{\bar i}^{+}}(t) := h_{\bar i}(t)  \hat{V}_{{\bar i},{\bar i}+1}$ and $\hat{H}_{{\rm int},{\bar i}^{-}}(t) := g_{\bar i}(t)  \hat{V}_{{\bar i}-1,{\bar i}}$ are the interactions between $\mathcal{Q}_{\bar i}$ and its neighboring sites, $\mathcal{Q}_{{\bar i}+1}$ and $\mathcal{Q}_{{\bar i}-1}$,  mediated by the time-dependent, interaction strengths $h_{\bar i}(t)$ and $g_{\bar i}(t)$ which can be independently controlled. Meanwhile, $\hat{H}_{{\rm int},{\mathcal{P}_{\rm E}}}(t) = h(t) \sum_{i\in{\rm E}} \hat{V}_{i,i+1}$  and $\hat{H}_{{\rm int},{\mathcal{P}_{\rm O}}}(t) = g(t) \sum_{i\in{\rm O}} \hat{V}_{i,i+1}$ describe uniform couplings between the elements of the pairs subsets $\mathcal{P}_{\rm E}:= \{(\mathcal{Q}_i,\mathcal{Q}_{i+1})\}_{ i\in {\rm E}}$ and $\mathcal{P}_{\rm O}:= \{(\mathcal{Q}_i,\mathcal{Q}_{i+1})\}_{ i\in {\rm O}}$, respectively. The associated uniform coupling strengths $h(t)$ and $g(t)$ are independent from $h_{\bar i}(t)$ and $g_{\bar i}(t)$, but like the latter are also controlled externally. In particular, they allow us to  selectively turn off the interactions within the subsets $\mathcal{P}_{\rm E}$ and $\mathcal{P}_{\rm O}$  at different times, transforming the system into a collection of isolated pairs. We refer to these two drive windows, $\mathcal{T}_{h/g}$, as $h$-drive and $g$-drive, respectively, i.e.,
\begin{eqnarray}
        &&\mathcal{T}_h : \ g(t) = g_{\bar{i}}(t) = 0 \  \wedge \ h(t) = h_{\bar{i}}(t) = J  \ ,  \label{eq:alternative-g-h} \\
        &&\mathcal{T}_g : \ h(t) = h_{\bar{i}}(t) = 0 \  \wedge \ g(t) = g_{\bar{i}}(t) = J \ , \label{eq:alternative-h-g}
\end{eqnarray}
where $J$ is an engineered target coupling strength. 
By alternating between these two drives the system evolves under a non-trivial dynamics which we make explicit below.

\textit{Information transport.}---At its core, our scheme uses global control fields to route information to a local-control zone where single- and two-qubit gates are performed, thereby achieving universal computation. This routing-based picture offers a general framework for understanding global control. While all previously proposed global-control schemes require auxiliary qubits (e.g., buffer qubits, pointer qubits, etc.)~\cite{Lloyd1993, Benjamin2000, benjamin_2003, Fitzsimons_2006, cesa2023universal, menta2024globally, cioni2024conveyorbelt, menta2025building} for implementing transport of the computational (i.e., information-storing) qubits, we emphasize that in our model, all $N$ qubits are computational qubits, resulting in zero qubit overhead for global control.

Realizing this routing framework concretely requires a mechanism that allows each individual qubit state to be transferred through the register using only the global fields. To this end, we engineer a compilation scheme that transports the qubit states around the ring through exchange operations between neighboring qubits of different parity. We begin by noting that the ability to alternately ``turn on'' and ``turn off'' the coupling strengths, as shown in Eqs.~\eqref{eq:alternative-g-h} and \eqref{eq:alternative-h-g}, implies that the temporal evolution can be divided into a sequence of alternating disjoint time windows. By setting the interaction time $\tau_{h/g} = \frac{3\pi}{2J}$, the system executes an iSWAP gate. However, for seamless transport of quantum information, a conventional SWAP gate is required. 

In our framework, we can achieve a pure SWAP, e.g., by decomposing it into a sequence of iSWAP gates and global rotations. Specifically, as shown in Ref.~\cite{Schuch2003, Marvian2024} and in the Supplemental Material (SM)~\cite{SM}, this can be done using three iSWAP operations and global $\sqrt{X} \simeq R_X(\frac{\pi}{2})$ pulses:
\begin{equation}
    \resizebox{\columnwidth}{!}{$
    \text{SWAP} =  (I \otimes \sqrt{X}) \ \text{iSWAP} \ (\sqrt{X} \otimes I) \ \text{iSWAP} \  (I \otimes \sqrt{X}) \  \text{iSWAP},
    $}
\end{equation}
where each component of the tensor products acts globally either on the even-parity qubits or the odd-parity one. If the hardware supports tunable couplers capable of mediating both XY and ZZ interactions~\cite{Sung2021}, one can achieve the same transformation more compactly~\cite{krizan2025} using the coupler to perform both a CZ and an iSWAP:
\begin{equation}\label{iswap_to_swap_local}
    \text{SWAP} = (S^\dagger \otimes S^\dagger) \, \text{iSWAP} \, \text{CZ} \, ,
\end{equation}
where $S^{\dagger} \simeq R_Z(-\pi/2)$. Of course, one could alternatively use optimal control to design a sequence of pulses that directly implements a SWAP operation \cite{Huang2014OptimalControl, Spiteri2018QuantumControl, Cho2024DirectPulse}.  By interleaving SWAP blocks, the dynamics can selectively move the internal state of any qubit $\mathcal{Q}_i$ into the locally driven site ${\bar{i}}$~\cite{SM} where arbitrary local operations can be realized.

\textit{Universal quantum computation.}---To achieve universality in a globally-controlled quantum processor, the system Hamiltonian must incorporate a ``global symmetry-breaking'' mechanism~\cite{Hu2025, Gargiulo2026}. In our architecture, this role is fulfilled by the locally controlled terms in Eq.~\eqref{eq:H-tot1}: the single-qubit drive $\hat{H}_{{\rm drive},\bar{i}}(t)$ and the local interaction terms $\hat{H}_{{\rm int},\bar{i}^{\pm}}(t)$. The single-qubit term implements local rotations on the qubit currently residing at site $\bar{i}$.

By applying arbitrary iterations of the SWAP transport protocol, the logical state of any qubit in the register can be routed to the $\bar{i}$-th site. Once positioned there, we execute the local gate protocol:
\begin{equation}
    \mathcal{T}_{\rm 1q}: g(t)= h(t) = g_{\bar{i}}(t) = h_{\bar{i}}(t)= 0 \ \wedge \ \Omega(t), \phi(t) \neq 0.
\end{equation}
Setting the drive frequency to resonance ($\omega_{\rm d} = \omega_{\mathcal{Q}}$), we implement an arbitrary single-qubit rotation $\hat{U}_{{\rm 1q},\bar{i}} = \exp\!\left[-\frac{i\theta}{2}\, \mathbf{n}_\phi \cdot \bm{\sigma}_{\bar{i}}\right]$ (up to a global phase). Here, $\bm{\sigma}_{\bar{i}}$ is the Pauli vector, $\mathbf{n}_\phi = (\cos\phi, \sin\phi, 0)$ is the rotation axis in the equatorial plane, and the rotation angle is $\theta = \int_0^{\tau_{\rm 1q}} \Omega(t')\, \text{d}t'$.

Beyond single-qubit rotations, computational universality requires entangling gates between qubits. This is achieved via the two locally controlled XY interactions with strengths $h_{\bar{i}}(t)$ and $g_{\bar{i}}(t)$. The entangling protocol is defined by:
\begin{equation}
     \mathcal{T}_{\rm 2q}: \ g(t)= h(t) = 0 \quad \wedge \quad 
     \begin{cases}
         g_{\bar{i}}(t) = 0, \quad h_{\bar{i}}(t) = \tilde{J} \\
         g_{\bar{i}}(t) = \tilde{J}, \quad h_{\bar{i}}(t) = 0 
     \end{cases},
\end{equation}
where $\tilde{J}$ is the engineered coupling strength. Evolution over a time window of duration $\tau_{\rm 2q} =\frac{3\pi}{2\tilde{J}}$ generates an iSWAP gate between the pairs $(\mathcal{Q}_{\bar{i}-1}, \mathcal{Q}_{\bar{i}})$ or $(\mathcal{Q}_{\bar{i}}, \mathcal{Q}_{\bar{i}+1})$. As arbitrary single-qubit gates combined with iSWAP constitute a universal set~\cite{Nielsen2010}, our architecture is formally universal (see Fig.~\ref{fig:ring-int}(b) and the End Matter).

A realization of this architecture with superconducting qubits, consisting of always-on XY interactions and global control on the internal frequencies of the qubits, is presented in the SM~\cite{SM}. 

\textit{Quantum error correction.}---Standard QEC implementation requires the computation and measurement of syndromes, which together we term syndrome extraction (SE)~\cite{Nielsen2010}. Circuits for performing SE first entangle data qubits (which store the logical quantum state) with syndrome qubits (ancillary qubits used to read out the syndromes), followed by measurement of all syndrome qubits.  Depending on the spatial distribution of syndromes in the architecture and the connectivity of the device, SE circuits may contain operations that act in parallel across the register (and can thus be implemented by global gates) as well as local operations addressing single qubits. In our global architecture, executing local gates on different qubits requires moving the target qubit to the local-control site, applying the gate, and moving the next target qubit into the local-control zone. As the number of physical qubits increases, such serial execution will inhibit the performance of QEC, as errors can start to accumulate more quickly than they can be corrected.  Therefore, care must be taken to select a QEC code compatible with the global architecture.

To this end, we identify a cyclic stabilizer code $[\![n,1,d]\!]$ encoding one logical qubit into $n$ data qubits, in which syndrome computation is implemented \textit{solely} with global gates, eliminating the need for routing qubits to the local-control zone \cite{Simakov2025}.  Furthermore, the code possesses a circular topology perfectly matched to our ring architecture, allowing for a straightforward transpilation of the SE circuits into a series of global iSWAP gates and global single-qubit rotations. The only part of the SE procedure that must proceed serially is measurement of the syndromes, as global control presumes not to have local read-out on every site.  Measurement of syndromes is thus implemented with alternating rounds of (i) measuring the qubit(s) at the local measurement site(s) and (ii) performing global SWAPs to carry the next set of syndrome qubits to the measurement zone(s). This process must be repeated until all syndromes have been measured.  In our minimal model, the ring is endowed with one local measurement site, but as we will show, distributing additional local measurement sites around the ring can improve QEC performance. The number of measurement rounds that must be executed is given by $M = \lceil \frac{n_s}{\ell} \rceil$, where $n_s = n - 1 = 4d - 8$ is the number of syndromes to be measured for the cyclic code with distance $d$, and $\ell$ is the number of measurement sites. The total ring will contain $N=2n$ qubits, with half designated as data qubits and half designated as syndrome measurement qubits. See the End Matter for more details on the cyclic code and its global implementation.

We perform Monte Carlo simulations of the cyclic code on our architecture to estimate logical error rates, following the method presented in Ref.~\cite{Simakov2025}. We generate circuits with a varying number of back-to-back SE rounds, between 1 and 50, and simulate them in a noisy environment with the Stim package~\cite{gidney2021stim}. We assume a phenomenological noise model based on the two major sources of error. The first is depolarizing error~\cite{Nielsen2010} on the data qubits while they remain idle during syndrome qubit measurement. We account for this by inserting a depolarizing channel on the data qubits in between SE rounds.  The second is measurement error, which we implement with a bit-flip error channel on the syndrome qubits right before they are measured.  Finally, for decoding we use a look-up table augmented with a `memory' \cite{Simakov2025}, which enables the correction of both spacelike and timelike errors.

\begin{figure}[!t]
    \centering
    \begin{overpic}[width=0.94\columnwidth]{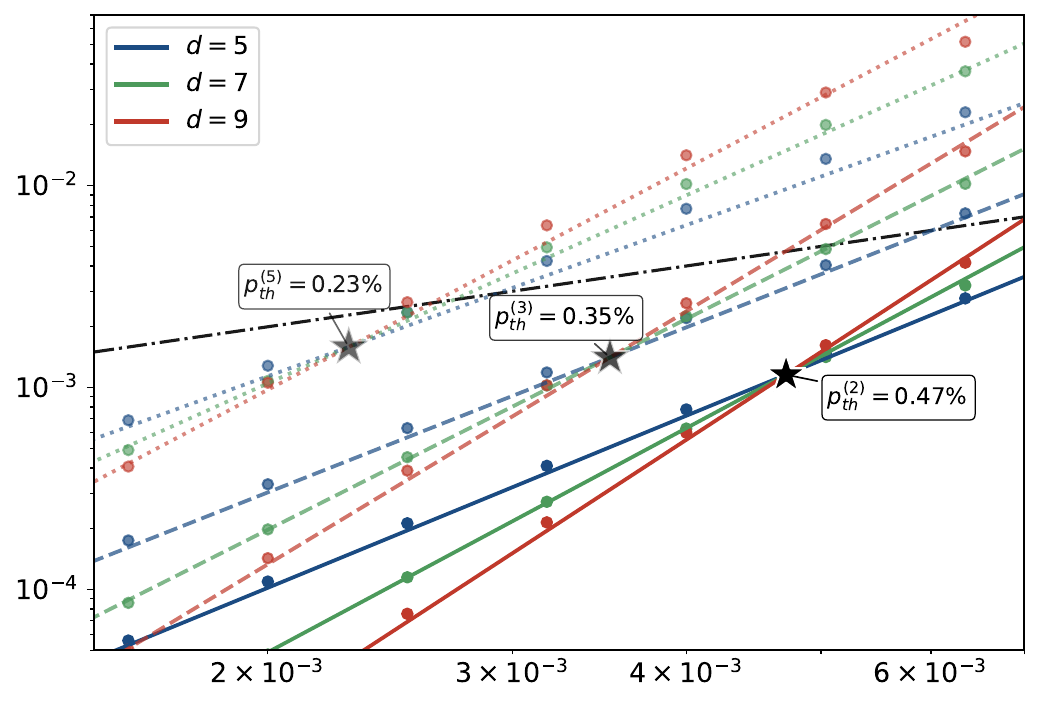}
    \put(-3,20){\rotatebox{90}{Logical Error Rate}}
    \put(38,-3){Physical Error Rate}
    \end{overpic}
    \caption{Logical error rate versus physical error rate for the cyclic code executed on the globally-controlled ring architecture, for code distances $d=5$ (blue), $d=7$ (green), and $d=9$ (red), using enough measurement sites for $M=5$ (dotted curves), $M=3$ (dashed curves), and $M=2$ (solid curves) rounds of measurement (see Table~\ref{tab:wires}). Circles represent data obtained from Monte Carlo simulations under the phenomenological noise model, while lines are fit to guide the eye.  The mutual crossing point within each set of code distances gives the QEC threshold $p^{(M)}_{\rm th}$, demarcated with a star. The dashed-dotted black line marks break-even (logical error rate equal to physical error rate).} 
    \label{fig:threshold}
\end{figure}

While a single measurement site suffices in principle, achieving strong QEC performance will likely require additional sites to keep syndrome-extraction SE circuits at a reasonable depth.  Figure~\ref{fig:threshold} plots the logical error rate as function of the physical error rate for code distances $d=5$ (blue), $d=7$ (green), and $d=9$ (red), using enough measurement sites for each distance to perform complete SE with $M=5$ (dotted curves), $M=3$ (dashed curves), and $M=2$ (solid curves) rounds of measurement.  Table~\ref{tab:wires} shows the number of measurement sites $\ell$ used for each of the distances $d$ and number of measurement rounds $M$. The right-most column provides the approximate percentage of all qubits (data and syndromes) endowed with local measurement for a given $M$, giving a rough idea of the measurement wiring overhead for each instance. For example, the distance $d=5 (d=7)$ code has 13(21) stabilizer generators to be measured (see End Matter), so $M=5$ rounds of measurement would require $\ell=3 (\ell=5)$ local measurement sites, which for all code distances equates to approximately 11$\%$ of all qubits featuring local read-out. 

For fixed $M$, we identify the QEC threshold $p^{(M)}_{\rm th}$ as the intersection point of the curves for the three code distances, demarcated with a star.  Note that fixing $M$ means supplying a different number of local measurement sites $\ell$ for each code distance $d$.   We observe that QEC thresholds of $\sim 10^{-3}$ can be obtained with low measurement wiring overhead.  Previous estimates for globally-controlled arrays found a threshold on the order of $10^{-10}$ when taking into account the limited parallelism of gate execution and measurement due to global control~\cite{Kay_2007}, as we likewise do here. It should be noted that little was done in that work to optimize the threshold, as the goal was simply to evidence the \textit{existence} of a threshold, but this alone cannot account for the nearly seven order-of-magnitude improvement we exhibit here. Furthermore, we observe that the threshold can be systematically improved by increasing the number of local measurement sites (i.e., lowering $M$). This provides a path for obtaining a desired level of QEC performance by tuning the number of local measurement sites, subject to the practical constraints imposed by achievable measurement wiring density.

\renewcommand{\arraystretch}{1.3}
\begin{table}[htbp]
    \centering
    
    
    \begin{tabular}{c c c c c @{\hspace{\tabcolsep}\vrule width 1.2pt\hspace{\tabcolsep}} c}
    
        & & \multicolumn{3}{c}{\textbf{Code distance ($d$)}} & \\
        \cmidrule(l){3-5}
        

        & & \makebox[1.3cm]{\textbf{5}} & \makebox[1.3cm]{\textbf{7}} & \makebox[1.3cm]{\textbf{9}} & \makebox[1.3cm]{\textbf{($\%$)}} \\
        
        \midrule
        \multirow{4}{*}{\rotatebox{90}{\textbf{Rounds ($M$)}}}
        & \textbf{2} & 7 & 11 & 15 & $\sim$ 26 \\
        & \textbf{3} & 5 & 7 & 10 & $\sim$ 18 \\
        & \textbf{4} & 4 & 6 & 8 & $\sim$ 14 \\
        & \textbf{5} & 3 & 5 & 6 & $\sim$ 11 \\
        \bottomrule
    \end{tabular}
    \caption{Number of measurement sites required for each code distance $d$ to measure the complete set of syndromes within $M$ rounds of measurement. The right-most column gives the approximate percentage of all qubits (data plus syndrome) that are endowed with local measurement.}
    \label{tab:wires}
\end{table}

\textit{Discussion.}---We introduced the most resource-efficient globally-controlled architecture for quantum computation to date. Requiring only a single qubit species, one locally controlled element, and four global control channels, our scheme maintains full computational universality.  A key advantage is that such universality is obtained without species engineering, patterned couplings, or redundant encoding: each computational qubit corresponds to a single physical two-level system, with no auxiliary qubits required. This eliminates the dominant complexity present in earlier global-control schemes~\cite{Lloyd1993, Benjamin2000, benjamin_2001, benjamin_2003, benjamin-bose_2004, Fitzsimons_2006}. A lack of auxiliary qubits also enables a clean implementation of QEC, as a single error-correcting scheme can be uniformly applied to protect all qubits in the register.

We also outlined a realistic implementation with superconducting qubits.  Unlike previous globally driven superconducting proposals~\cite{menta2024globally, cioni2024conveyorbelt, menta2025building}, whose reliance on large always-on ZZ interactions~\cite{Riccardi2026} comes at the cost of some physical overhead, our architecture exploits fast XY (iSWAP) dynamics, making it computationally comparable to standard locally controlled processors.  Although spatial inhomogeneities remain an experimental challenge, recent work~\cite{aiudi2025} shows that globally optimized pulses can compensate disorder, indicating that our approach is compatible with realistic superconducting hardware.  Such a simplification allows us to achieve QEC thresholds almost seven orders of magnitude greater than previous estimates for globally-controlled arrays.

While the cyclic code is advantageous in terms of qubit count (linear as opposed to quadratic scaling with code distance) and its adaptability to global control, the drawback is that its syndrome weights grow with increasing code distance, excluding it from the class of quantum low-density parity check (qLDPC) codes~\cite{LDPC1, LDPC2, Panteleev2022, Breuckmann2021}, which show the most promise for scaling quantum computers up to fault tolerance~\cite{gu2026nearest, nixon2026vine,
geher2025directional}.  
Developing qLDPC-based, globally driven fault-tolerant schemes tailored to our architecture is therefore an exciting avenue for future work.  

To this end, we highlight the fact that the effective connectivity of our architecture mirrors that of a 2D lattice and can even approach all-to-all connectivity with ``keyhole'' geometries~\cite{SM}. Such enhanced connectivity may ease implementation of qLDPC codes, such as bicycle and hypergraph-product constructions~\cite{Tillich2014, Kovalev2013, Hastings2021} by removing geometric constraints which generally limit their implementation on standard planar architectures.  

Beyond encoding a single logical qubit, the ring could act as a building block for a scalable \textit{logical} architecture: multiple rings could be tiled in a two-dimensional network of interconnected rings, with each encoding a logical qubit and coupled to its neighboring rings through a shared physical qubit. Using the global control structure already in place, the physical transport-and-gate mechanism can be lifted to the logical level, using the shared qubit for entangling adjacent rings.  Logical SWAPs between adjacent rings (which can be made transversal for identically encoded rings, hence, making them code-independent) could likewise be implemented through the bridging qubit, thereby providing logical-level routing across the array.

Overall, our proposed architecture provides a realistic, scalable, and experimentally compatible route toward large-scale global-control quantum processors with minimal hardware overhead. Naturally, our scheme can be implemented on any quantum computing platform where iSWAP or SWAP gates are available ~\cite{Bergonzoni2025,Ildefonso2025,Zhu2025, Kiefer2026, kiefer2026digital, calliari2026programmable}. We finally stress that this type of architecture can be further generalized ~\cite{Caravelli-theorem}.

\textit{Acknowledgments.}---The authors thank the Architecture and Hardware Planckian teams for insightful discussions. We thank R. Aiudi for comments on early versions of this scheme.
At the time of their contributions, authors affiliated with Planckian are either employees of Planckian or research collaborators with Planckian.
We acknowledge that this work is fully human-generated. 

\bibliography{biblio}


\clearpage
\setcounter{section}{0}
\setcounter{equation}{0}%
\setcounter{figure}{0}%
\setcounter{table}{0}%

\renewcommand{\theequation}{E\arabic{equation}}
\renewcommand{\thefigure}{E\arabic{figure}}
\renewcommand{\thetable}{E\arabic{table}}
\section*{End Matter}

\subsection*{Universality of the globally controlled ring}

Universality of the scheme follows from a simple property of the \textit{connectivity graph} of the architecture, shown in Fig.~\ref{fig:ring-int}(b) of the main text: its vertices are the computational qubits, and an edge joins any two qubits that the global transport dynamics can bring together into the locally controlled region. Because this graph is \textit{connected}, the global SWAP dynamics can route any qubit of the ring to the local site $\bar{i}$, and any chosen pair of qubits to the locally controlled link. There, two ingredients are available: (i) arbitrary single-qubit rotations, generated by the local drive $\hat{H}_{{\rm drive},\bar{i}}(t)$, and (ii) a two-qubit entangling gate: the iSWAP produced by the local couplings $h_{\bar{i}}(t)$ and $g_{\bar{i}}(t)$. Since arbitrary single-qubit gates together with the iSWAP form a universal gate set~\cite{Nielsen2010}, and the connectedness of the graph guarantees that this set can be enacted on \textit{any} qubit and between \textit{any} pair, global-control transport supplemented by local gates renders the architecture universal~\cite{Caravelli-theorem}.

The same construction can be realized in a complementary hardware setting in which the XY couplings are always on and control is applied globally to the qubit internal frequencies rather than to the interactions: selectively detuning the qubit species reproduces the identical pattern of parallel iSWAP operations (see SM~\cite{SM}). Moreover, the connectivity---and hence the cost of compiling entangling gates between distant qubits---can be increased by deforming the ring into a \textit{keyhole} or \textit{lollipop} geometry, in which a short, locally controlled corridor is appended to the ring; at the price of a small amount of additional local control this raises the effective connectivity up to all-to-all (see SM~\cite{SM}).

\subsection*{Cyclic codes under global control}
The circular topology of the architecture introduced in this work is naturally matched to the implementation of cyclic stabilizer quantum error-correcting codes. In this Appendix we show that such codes are not only compatible with global control, but in fact their syndrome extraction can be implemented using only the globally driven routing.

\begin{figure}[t]
    \centering
    \includegraphics[width=0.7\columnwidth]{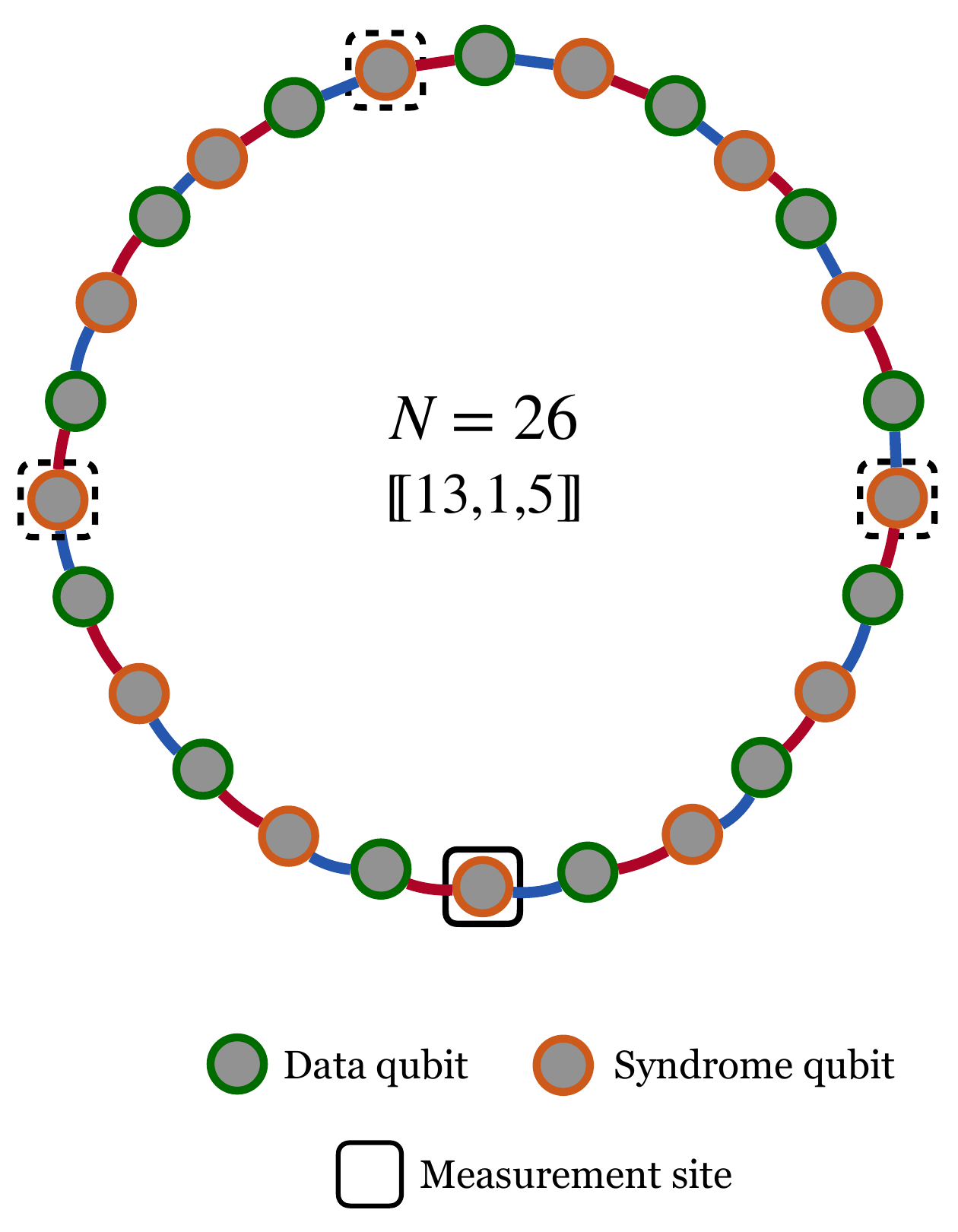}
    \caption{The cyclic $[\![13,1,5]\!]$ code laid out on our ring architecture with $N=2n=26$ total qubits, using $n=13$ data qubits (green outline) and syndrome qubits (orange outline), respectively, arranged in an alternating pattern. Red and blue bonds represent the two global controls $g(t)$ and $h(t)$. Squares around qubits denote local measurement sites, where syndrome qubits can be measured and reset.  In principle, only one site is required, shown with the solid square outline.  More sites can be evenly distributed to reduce the number of rounds of measurement $M$, exemplified with dashed square outlines for $\ell=4$ total measurement sites.  For the distance $d=5$ code with $n_s = 13$ syndromes, this results in $M=\lceil \frac{n_s}{\ell} \rceil = 4$ rounds of measurement.  As this code is of distance $d=5$, it can correct any two single-qubit errors.}
    \label{fig:codes}
\end{figure}

We consider a non-CSS cyclic stabilizer code $[\![n,1,d]\!]$ encoding one logical qubit into $n$ physical qubits with stabilizer group $\mathcal{S} = \langle g_0, \dots, g_{n-2} \rangle$, where the generators are obtained by cyclic translations of a base operator $g_0$, i.e.~$g_j = \mathrm{Cyc}^j(g_0)$ with $j=0,\dots,n-2$ and $\mathrm{Cyc}$ being the cyclic shift operator on the ring. 
A paradigmatic example is the family introduced in Ref.~\cite{Simakov2025}, where the base stabilizer, for various code distances $d$, takes the form
\begin{align}
&d=3, \, n=5: \quad \; \;  g_0 = Z X X Z I,  \nonumber \\
&d=5, \, n=13: \quad g_0 = ZIXXIZI^{\otimes 7} ,\nonumber \\
&d=7, \, n=21: \quad g_0 = ZIIXXXXIIZI^{\otimes 11},\nonumber \\
&d=9, \, n=29: \quad g_0=ZXXIIXIIXIIXXZI^{\otimes 15} \nonumber
\end{align}
and all other $n-2$ generators are obtained by cyclic permutation. The logical Paulis are transversal (global):
\begin{equation}
X_L = X^{\otimes n}, \quad Y_L = Y^{\otimes n}, \quad Z_L = Z^{\otimes n}.
\end{equation}

These codes can correct up to $t=\lfloor (d-1)/2 \rfloor$ errors, with the logical error rate scaling as $\epsilon_L \sim p^{(d+1)/2}$ under standard depolarizing noise models. The number of physical qubits scales linearly with code distance. In particular, one has $(d=3,\, n=5)$ corresponding to the five-qubit perfect code, $(d=5,\, n=13)$, $(d=7,\, n=21)$, and, more generally, $n = 4d - 7$ with $d \in 2\mathbb{N}+1$.

A key feature of cyclic codes is that they admit an implementation on a ring of alternating data and syndrome qubits, as shown in Fig.~\ref{fig:codes} for the explanatory $[\![13,1,5]\!]$ code. Specifically, we consider a register of $N=2n$ qubits arranged as
\begin{equation}
\cdots - S_i - D_i - S_{i+1} - D_{i+1} - \cdots
\end{equation}
where $D_i$ are data qubits and $S_i$ are syndrome qubits (ancillary qubits used for syndrome measurement)---such that $S_i \in {\rm E(O)}$ and $D_i \in {\rm O(E)}$.

Syndrome extraction is performed by entangling each syndrome qubit with a finite number of neighboring data qubits. Importantly, this can be realized using only nearest-neighbor iSWAP gates plus single-qubit gates~\cite{Simakov2022, Simakov2025}. The total number of physical qubits and distinct two-qubit couplings scale linearly with the distance: $N_{\mathrm{phys}} \sim N_{2q} \sim \mathcal{O}(d)$, in contrast to planar codes where $N_{\mathrm{phys}} \sim \mathcal{O}(d^2)$.

The crucial observation is that the stabilizer extraction circuit of cyclic codes comprises solely operations that act in parallel across the entire register, allowing for implementation with global gates alone. At each step, identical two-qubit gates are applied simultaneously to disjoint pairs of neighboring qubits. This exactly matches the global-control dynamics of our architecture, where the Hamiltonian alternates between $\hat{H}_{{\rm int},\mathcal{P}_{\rm E}}(t)$ and $\hat{H}_{{\rm int},\mathcal{P}_{\rm O}}(t)$
generating parallel iSWAP operations on alternating bonds. Single-qubit gates necessary for the syndrome extraction protocol are implemented in parallel on even- and odd-indexed qubits via~\eqref{eq:odd-even}. 

Due to the cyclic and translation-invariant structure of the dynamics, the syndrome qubits effectively ``rotate'' around the ring, sequentially interacting with the data qubits required to build up each stabilizer. At the end of the protocol, the stabilizer information is coherently accumulated into the syndrome qubits through a sequence of $\mathcal{O}(n)$ native two-qubit operations.
The only non-global operation required is the measurement of syndrome qubits. In our architecture, this is resolved by exploiting the intrinsic transport capability of the ring. After the stabilizer information has been encoded into all syndrome qubits, their quantum states are sequentially relocated to sites endowed with local read-out via the global transport dynamics, until all syndromes have been measured.

\clearpage
\onecolumngrid

\begin{center}
\vspace*{\baselineskip}
{\large \textbf{Supplemental Material}}
\end{center}

\setcounter{section}{0}
\setcounter{equation}{0}%
\setcounter{figure}{0}%
\setcounter{table}{0}%

\setcounter{page}{1}
\renewcommand{\thepage}{S\arabic{page}}
\renewcommand{\theequation}{S\arabic{equation}}
\renewcommand{\thefigure}{S\arabic{figure}}
\renewcommand{\thetable}{S\arabic{table}}


\section{iSWAP gate via XY interaction}
A single-qubit unitary rotation of angle $\theta$ about the $\mathbf{n} \in \mathbb{R}^3$ direction ($|\mathbf{n}|^2=1$) is defined as
\begin{eqnarray} 
R_\mathbf{n}(\theta)
     :=\exp[ -i ({\theta}/{2}) \mathbf{n} \cdot \bm{\sigma}]
     =\cos(\theta/2) I
    - i \sin(\theta/2) \mathbf{n} \cdot \bm{\sigma}\; .
\end{eqnarray} 
Based on this definition, let us enumerate the main single-qubit gates (up to a global phase):
\begin{itemize}
    \item $X = e^{i\pi/2} R_X(\pi)$, $Y = e^{i\pi/2} R_Y(\pi)$, $Z = e^{i\pi/2} R_Z(\pi)$;
    \item $H = e^{i\pi/2} R_\mathbf{n}(\pi)$ with $\mathbf{n} = \frac{\mathbf{x} + \mathbf{z}}{\sqrt{2}}$;
    \item $S = e^{i\pi/4} R_Z(\pi/2)$, $T = e^{i\pi/8} R_X(\pi/4)$.
\end{itemize}
Notice that the global phases are irrelevant since they do not affect measurement statistics.

\medskip

Let us set $\hbar=1$ for simplicity. Let us consider two nearest-neighbors qubits, $A$ and $B$, coupled via a XY interaction, whose Hamiltonian is 
\begin{align}
    \label{eq:H-tot-XX}
    \hat{H} &=  \hat{H}_A + \hat{H}_B + \hat{H}_{\rm int} \\ &:= \frac{\omega_A}{2}\hat{\sigma}^{(z)}_A + \frac{\omega_B}{2}\hat{\sigma}^{(z)}_B + \frac{g}{2} (\hat{\sigma}^{(x)}_A \otimes \hat{\sigma}^{(x)}_B + \hat{\sigma}^{(y)}_A \otimes \hat{\sigma}^{(y)}_B),
\end{align}
where $\hat{\sigma}_\chi^{(\alpha)}$ are the Pauli matrices with $\chi \in \{A,B\}$ and $\alpha \in \{x,y,z\}$.
Defining the raising and lowering operators $\hat{\sigma}^{(\pm)}_\chi = 1/2 (\hat{\sigma}^{(x)}_\chi \pm i\hat{\sigma}^{(y)}_\chi)$, the interaction term simply becomes
\begin{eqnarray}
    \label{eqsup:interaction-swap}
    \hat{H}_{\rm int} = g \left(\hat{\sigma}^{(+)}_A \otimes \hat{\sigma}^{(-)}_B + \hat{\sigma}^{(-)}_A \otimes \hat{\sigma}^{(+)}_B \right).
\end{eqnarray}
The effect of $\hat{H}_{\rm int}$ is to create an excitation on one qubit and annihilate the other. Let us write the whole Hamiltonian $\hat{H}$ as a matrix in the joint computational basis of the two qubits, $\{|00\rangle, |01\rangle, |10\rangle, |11\rangle\}_{AB}$, such that
\begin{equation}\label{eqsub:matrixH}
    [H] = \begin{pmatrix}
        \frac{\omega_A+\omega_B}{2} & 0 & 0 & 0 \\
        0 & \frac{\omega_A-\omega_B}{2} & g & 0 \\
        0 & g & \frac{-\omega_A+\omega_B}{2} & 0 \\
        0 & 0 & 0 & \frac{-\omega_A-\omega_B}{2}
    \end{pmatrix} \ .
\end{equation}
Note that $|00\rangle$ and $|11\rangle$ are eigenstates while $\{|01\rangle, |10\rangle\}$ evolves non-trivially. The sub-matrix 2-by-2 acting on $\{|01\rangle, |10\rangle\}$, 
\begin{equation}
    [H_{\rm sub}] =  \begin{pmatrix}
        \Delta/2 & g \\
        g & -\Delta/2  \\
    \end{pmatrix} ,
\end{equation}
has eigenvalues $E_{\pm} = \pm  \sqrt{(\Delta/2)^2 +g^2}$, where $\Delta := \omega_A -\omega_B$ is the detuning between the two qubits. Being $\hat{H}_0$ the non-interacting part of Eq.~\eqref{eq:H-tot-XX}, the unitary transformation to go in the interaction picture, i.e., the reference frame rotating at the frequencies of the qubits, is given by $\hat{U}_{\rm rf}(t) := e^{-it \hat{H}_0} = e^{-i\omega_A t/2 \hat{\sigma}_A^{(z)}} \otimes e^{-i\omega_B t/2 \hat{\sigma}_B^{(z)}}$. Thus, the Hamiltonian $\hat{H}$ reduces to 
\begin{eqnarray}
    \hat{H}_{\rm rf}(t) &=& \hat{U}_{\rm rf}(t) \hat{H} \hat{U}_{\rm rf}^\dagger(t) \\
    &=& g \left(e^{i \Delta t} \hat{\sigma}^{(+)}_A \otimes \hat{\sigma}^{(-)}_B + e^{-i \Delta t}\hat{\sigma}^{(-)}_A \otimes \hat{\sigma}^{(+)}_B \right) .
\end{eqnarray}

\subsection{Resonant case}
In the resonant scenario, $\Delta = 0$, i.e., $\omega_A=\omega_B$, the effective Hamiltonian reduces to the time-independent interacting Hamiltonian~\eqref{eqsup:interaction-swap}: $\hat{H}_{\rm rf}(t)|_{\Delta=0} = \hat{H}_{\rm int}$. In particular, the unitary dynamics is given by 
\begin{eqnarray}
    \hat{U}_{\rm dyn}(t) = \exp\left(-i t \hat{H}_{\rm int}\right) = \begin{pmatrix}
        1 & 0 & 0 & 0 \\
        0 & \cos(gt) & -i\sin(gt) & 0 \\
        0 & -i\sin(gt) & \cos(gt) & 0 \\
        0 & 0 & 0 & 1
    \end{pmatrix},
\end{eqnarray}
that by setting $gt=3\pi/2$ reduces to an iSWAP gate between the two qubits, i.e.
\begin{eqnarray}
    \label{eq:iswap}
    \hat{U}_{\rm dyn}\left(\frac{3\pi}{2g}\right) =  \begin{pmatrix}
        1 & 0 & 0 & 0 \\
        0 & 0 & i & 0 \\
        0 & i & 0 & 0 \\
        0 & 0 & 0 & 1
    \end{pmatrix} := \text{iSWAP} \ .
\end{eqnarray}
Notice that, if one sets $gt=\pi/2$, then the unitary evolution reduces to an iSWAP${}^\dagger$, i.e., an iSWAP except for a relative minus sign in the central submatrix; we call it $\text{ iSWAP}^\dagger = \hat{U}_{\rm dyn}(\frac{\pi}{2g})$. To pass from one operation to the other is sufficient to apply local phase corrections:
\begin{equation}
    (Z\otimes I)\, \hat{U}_{\rm dyn}\!\left(\tfrac{\pi}{2g}\right)\, (Z\otimes I) = \text{iSWAP},
\end{equation}
where $Z$ and $I$ denote the Pauli-$Z$ and identity operators acting on qubits $A$ and $B$, respectively. The single-qubit $Z$-operation can be induced by a pulse rotation $ Z \simeq R_Z(\pi)$.

\subsection{Off-resonant case}
When the two interacting qubits are far off resonance, $|\Delta|/g \gg 1$, the exchange terms proportional to $e^{\pm i \Delta t}$ oscillate very rapidly, so that within the rotating-wave approximation (RWA) their first-order contribution averages to zero. We now investigate what happens at the level of second-order perturbation theory. Treating the SWAP interaction~\eqref{eqsup:interaction-swap} as the perturbative term, we note that its action is nonzero only on the off-diagonal matrix elements:
\begin{equation}
    \langle 01|\hat{H}_{\rm int}|10\rangle = \langle 10|\hat{H}_{\rm int}|01\rangle =  g.
\end{equation}
We then apply the Schrieffer--Wolff (SW) transformation, i.e., the unitary generated by $\hat{\cal W}$ that removes the off-diagonal terms of $\hat{H}_{\rm int}$ to first order, such that
\begin{equation}
    [\hat{H}_0,\hat{\cal W}] = -\hat{H}_{\rm int}.
\end{equation}
In our case, the only nonzero matrix elements of $\hat{\cal W}$ are
\begin{equation}
    \langle 01|\hat{\cal W}|10\rangle = \frac{\langle 01|\hat{H}_{\rm int}|10\rangle}{E_{01}-E_{10}} = \frac{g}{\Delta}, 
    \qquad 
    \langle 10|\hat{\cal W}|01\rangle = -\frac{g}{\Delta},
\end{equation}
so that $\hat{\cal W}$ is a $2\times 2$ block matrix,
\begin{equation}
    [\mathcal{W}] = \begin{pmatrix}
        0 & g/\Delta \\
        -g/\Delta & 0 \\
    \end{pmatrix},
\end{equation}
and $\hat{\cal W} = - \hat{\cal W}^\dagger$. The SW effective Hamiltonian to second order is
\begin{equation}
    \hat{H}_{\rm eff}^{(2)} = \hat{H}_0 + \frac{1}{2}[\hat{\cal W},\hat{H}_{\rm int}].
\end{equation}
Computing the diagonal matrix elements of the second-order term gives
\begin{equation}
    \langle 01|\tfrac{1}{2}[\hat{\cal W},\hat{H}_{\rm int}]|01\rangle = -\langle 10|\tfrac{1}{2}[\hat{\cal W},\hat{H}_{\rm int}]|10\rangle = \frac{g^2}{\Delta}.
\end{equation}
We thus arrive at the perturbative contribution
\begin{equation}
    \delta \hat{H}^{(2)} = \frac{g^2}{2\Delta} \left( \hat{\sigma}_A^{(z)} - \hat{\sigma}_B^{(z)} \right).
\end{equation}
The above second-order perturbation corresponds to a correction to the qubit frequencies, and consequently to faster single-qubit dynamics.

\subsection{Correcting the iSWAP into the pure SWAP gate}

Let us consider the two qubits, labeled as $A$ and $B$. Above, we have shown that when the two qubits are resonant in frequency ($\Delta=0$), the XY interaction does not give rise to an exact SWAP gate, but rather to an iSWAP acting on the Hilbert space of two qubits whose computational basis is $\{|00\rangle, |01\rangle, |10\rangle, |11\rangle\}_{AB}$. 
In the dynamics of our two-qubit system, the relative phase between $|01\rangle$ and $|10\rangle$ has to be corrected in order to induce a perfect exchange operation. 

In order to do so, we notice that the iSWAP can be turned into a conventional SWAP via a CZ gate and local rotations. In particular we have that~\cite{krizan2025} 

\begin{equation}\label{supp:iswap_to_swap_local}
    \text{SWAP} = (S^\dagger \otimes S^\dagger) \ \text{iSWAP} \ \text{CZ}  \ ,
\end{equation}

\begin{center}
\begin{quantikz}
\lstick{} & \gate[wires=2]{\mathrm{SWAP}} & \qw \\
\lstick{} &                          & \qw
\end{quantikz}
=
\begin{quantikz}
\lstick{} 
  & \ctrl{1} 
  & \gate[wires=2]{\mathrm{iSWAP}} 
  & \gate{S^\dagger} 
  & \qw \\
\lstick{} 
  & \ctrl{-1} 
  & 
  & \gate{S^\dagger} 
  & \qw
\end{quantikz}
\end{center}
with $[\text{iSWAP}, \text{CZ}] = 0$. Such a protocol can be implemented in superconducting qubits by combining iSWAP and CZ operations via tunable couplers, as shown in Ref.~\cite{Sung2021}.
Above we have that 

\begin{equation}
    \text{SWAP} = \begin{pmatrix}
        1 & 0 & 0 & 0 \\
        0 & 0 & 1 & 0 \\
        0 & 1 & 0 & 0 \\
        0 & 0 & 0 & 1
    \end{pmatrix}, \qquad \text{CZ} = \begin{pmatrix}
        1 & 0 & 0 & 0 \\
        0 & 1 & 0 & 0 \\
        0 & 0 & 1 & 0 \\
        0 & 0 & 0 & -1
    \end{pmatrix},
\end{equation}
and $\text{iSWAP}$ as in Eq.~\eqref{eq:iswap}. Notice that $S^\dagger \simeq R_Z(-\pi/2)$. 
In the framework of our work, since we have control on XY interaction and consequently on native iSWAP gate single local operations on the qubits, we would write Eq.~\eqref{supp:iswap_to_swap_local} only in terms of iSWAP and single-qubit operations. This can be done combining three iSWAP and local $\sqrt{X} \simeq R_X(\pi/2)$~\cite{Schuch2003, Marvian2024}, i.e.,

\begin{equation}\label{supp:iswap_to_swap_local2}
    \text{SWAP} =  (I \otimes \sqrt{X}) \ \text{iSWAP} \ (\sqrt{X} \otimes I) \ \text{iSWAP} \  (I \otimes \sqrt{X}) \  \text{iSWAP},
\end{equation}

\begin{center}
\begin{quantikz}
\lstick{} & \gate[wires=2]{\mathrm{SWAP}} & \qw \\
\lstick{} &                          & \qw
\end{quantikz}
=
\begin{quantikz}
\lstick{} 
  & \gate[wires=2]{\mathrm{iSWAP}} & \qw
  & \gate[wires=2]{\mathrm{iSWAP}} & \gate{\sqrt{X}}
  & \gate[wires=2]{\mathrm{iSWAP}} & \qw & \\
\lstick{} 
  &                              & \gate{\sqrt{X}}
  &                              & \qw
  &                              & \gate{\sqrt{X}}
  &                              
\end{quantikz}
\end{center}

We have then corrected the native iSWAP into a pure SWAP gate, which will be essential for the dynamics of our quantum processor.
Incidentally, from the decomposition above we also see that the CZ gate can be decomposed in terms of iSWAP as follows:

\begin{center}
\begin{quantikz}
\lstick{} 
  & \gate[wires=2]{\mathrm{iSWAP}} & \qw
  & \gate[wires=2]{\mathrm{iSWAP}} & \gate{\sqrt{X}}
  & \gate[wires=2]{\mathrm{iSWAP}} &  & \gate{S} \qw &  \gate[wires=2]{\mathrm{iSWAP}^\dagger} & \\
\lstick{} 
  &                              & \gate{\sqrt{X}}
  &                              & \qw
  &                              & \gate{\sqrt{X}}
  &         \gate{S}        \qw       &    &    
\end{quantikz}
= \begin{quantikz}
\lstick{} 
  & \ctrl{1} 
 \qw & \\
\lstick{} 
  & \ctrl{-1} 
  & \qw
\end{quantikz}
\end{center}
As a result, this also shows how to perform a CZ using the keyhole architecture described below.

\section{Physical implementation of the architecture with superconducting qubits}

We now present a physical realization of the aforementioned scheme using superconducting qubits. In modern superconducting hardware, the ZZ interaction is typically an undesired parasitic effect~\cite{Sung2021, Zhao2021}. This motivates the Hamiltonian model adopted here, where the ZZ term is assumed to be eliminated.

Since it is experimentally challenging---though not impossible~\cite{Kounalakis_2018, krizan2025}---to switch qubit–qubit interactions fully on and off, an alternative approach is to keep the XY interactions always on, and instead suppress them effectively by globally detuning selected pairs of superconducting qubits using global flux lines~\footnote{A flux line allows one to tune the internal frequencies of superconducting qubits. In particular, for flux-tunable qubits (such as SQUID-based transmons or flux qubits), the qubit frequency is given by $\omega_{\rm q}(t) = \omega_{\rm q}(\Phi(t)) = \sqrt{8E_{\rm C}E_{\rm J}(\Phi)}$, where $E_{\rm C}$ is the charging energy, determined by the qubit capacitance, $\Phi$ is the magnetic flux, and $E_{\rm J}(\Phi) = E_{\rm J}^{\rm max} \cos(\pi \Phi/\Phi_0)$ is the flux-dependent Josephson energy, with $\Phi_0$ the flux quantum.}. This realizes the same logical control structure as in the interaction-modulated scheme previously described.
For clarity of presentation, let us consider four types of qubits, $\mathcal{S}:=\{A, B, C, D\}$. All qubits of the same type share a global flux line and therefore have a globally-controlled internal frequency~\footnote{We note that the architecture requires at most four global control lines (one per qubit species), but this number can be reduced to three by introducing types-dependent static frequency offsets.}. Following the structure of the previously introduced model, we arrange the qubits in a ring of $N$ sites with the repeating pattern $ABCDABCD\ldots$. Each qubit interacts with its nearest neighbor through an always-on XY coupling. A pictorial illustration of the architecture for $N = 8$ computational qubits is shown in Fig.~\ref{fig:ring-freq}.
As discussed earlier, at each step we must selectively activate only the interactions involving the pairs the subsets $\mathcal{P}_{\rm E}$ or $\mathcal{P}_{\rm O}$. In the present formulation, these correspond respectively to the pairs $(AB, CD)$ and $(BC, DA)$. When two qubits interact via an XY coupling with strength $g$ but are brought far out of resonance, the effective interaction is strongly suppressed, up to a residual second-order local frequency correction depending on the detuning $\Delta$. Thus, by globally shifting the frequencies of qubit types $A$, $B$, $C$, and $D$ using four global flux lines, we can engineer time windows in which only the desired qubit pairs are resonant and therefore interact.
In this way, the same dynamical pattern of iSWAP operations used in the interaction-controlled version of the scheme is reproduced here through frequency control. 

\begin{figure}[t]
    \centering
    \includegraphics[width=0.6\linewidth]{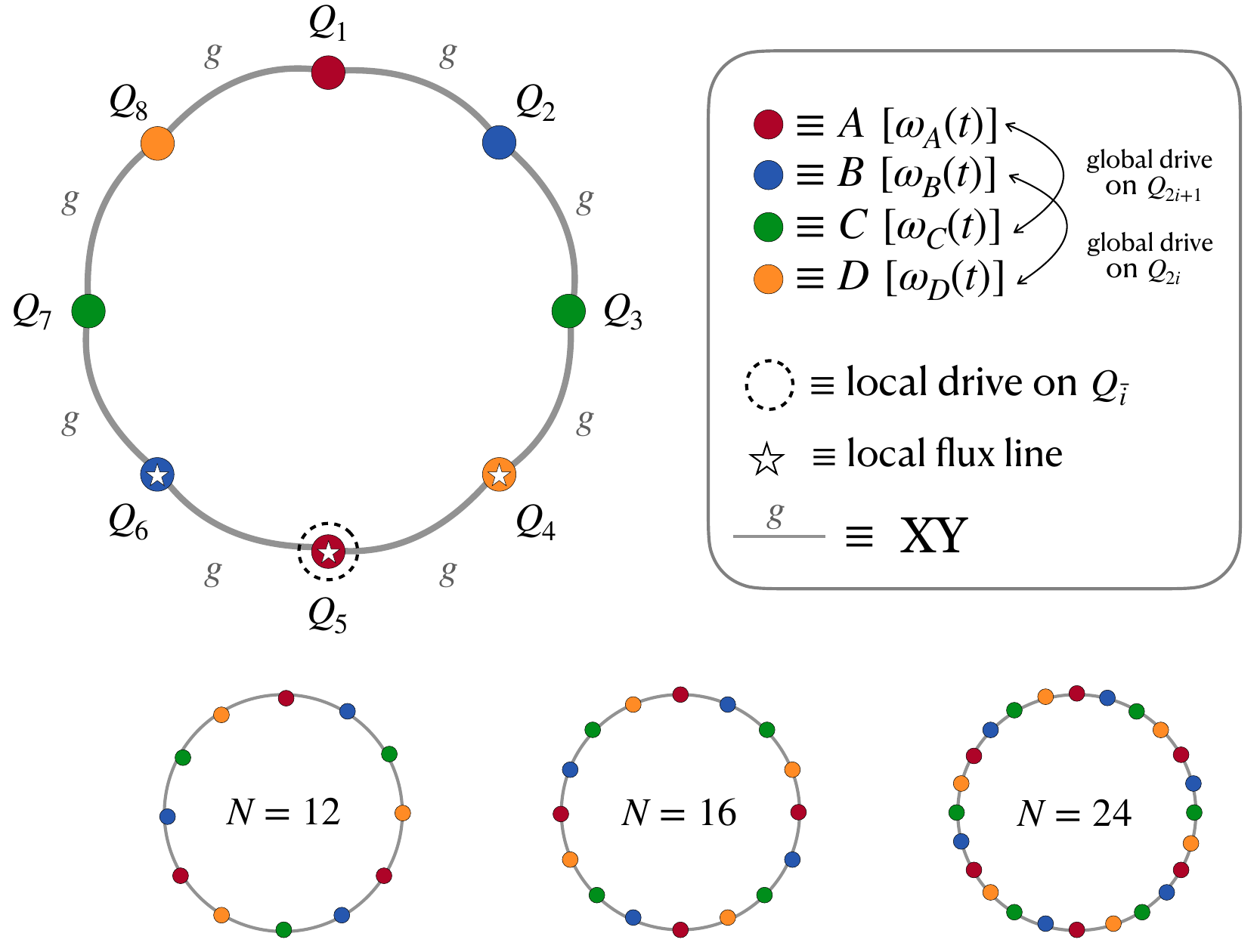}
    \caption{Superconducting quantum computing scheme with global control on internal qubit frequencies and always-on XY interactions for $N=8$ computational qubits. The legend is self-explanatory. By scaling $N$, the number of controls remains constant.}
    \label{fig:ring-freq}
\end{figure}

The Hamiltonian of the system is given by $\hat{H}(t) := \hat{H}_0(t) + \hat{H}_{\rm int} + \hat{H}_{{\rm drive}}(t)$ where 
\begin{eqnarray}
    &&\displaystyle \hat{H}_0(t):=\sum_{\chi \in \mathcal{S}} \sum_{i \in \chi} \frac{\omega_i(t)}{2} \hat{\sigma}_i^{(z)} \ , \label{eq:H0_freq} \\
    &&\displaystyle \hat{H}_{\rm int}:= \sum_{\langle i,j\rangle} \dfrac{g}{2} \left(\hat{\sigma}^{(x)}_{i} \otimes \hat{\sigma}^{(x)}_{j} + \hat{\sigma}^{(y)}_{i} \otimes \hat{\sigma}^{(y)}_{j}\right) \ \label{eq:Hint_freq} \ ,
\end{eqnarray}
and $\hat{H}_{\rm drive}(t)$ as in the main text. Now, in addition to the local drive term, the global symmetry-breaking term is contained in $\hat{H}_0$ for $i= \{\bar{i}-1, \bar{i}, \bar{i} +1 \}$.

Let $\Delta_{\chi\chi'}:= |\omega_{\chi}(t) - \omega_{\chi'}(t)|$ be the detuning between the two distinct types $\chi\neq \chi' \in \mathcal{S}.$ Thus, the dynamical, single-qubit gate, and two-qubit gate protocols presented above are formulated as follow:
\begin{eqnarray}
    &&\mathcal{T}_{AB\&CD} : \ \Delta_{BC/DA}(t)= \Delta_{\rm off} \  \wedge \ \Delta_{AB/CD}(t)= 0  \ ,  \label{eq:dyn-abcd} \nonumber \\
    &&\mathcal{T}_{BC\&DA} : \ \Delta_{AB/CD}(t)= \Delta_{\rm off} \  \wedge \ \Delta_{BC/DA}(t)= 0  \ , \nonumber  \label{eq:dyn-bcda}
\end{eqnarray}
that correspond to Eqs.~(3) and (4) of the main text, respectively, and 
\begin{eqnarray}
    &&\mathcal{T}_{\rm 1q}: \Delta_{\chi,\chi'} = \Delta_{\rm off} \ , \ \ \forall \ \chi\neq\chi' \in \mathcal{S} \ \wedge \ \Omega(t), \phi(t) \neq 0 \ , \nonumber \\
    &&\mathcal{T}_{\rm 2q}: \Delta_{\chi,\chi'} = \Delta_{\rm off} \ , \ \ \forall \ \chi\neq\chi' \in \mathcal{S} \  \bigg/  \nonumber  \begin{cases}
         \Delta_{\bar{i}, \bar{i}+1}(t) = \Delta_{\rm off},\quad \Delta_{\bar{i}-1, \bar{i}}(t)=0 \ , \\
         \Delta_{\bar{i}, \bar{i}+1}(t) = 0,\quad \Delta_{\bar{i}-1, \bar{i}}(t)=\Delta_{\rm off} \ . \nonumber
     \end{cases}
\end{eqnarray}
Above, $\Delta_{\rm off}$ denotes the off-resonant detuning. By combining the global dynamics with these protocols, we obtain a realizable globally-controlled quantum computer for superconducting qubits. However, we remark that the initial scheme -- although more experimentally demanding -- could still be implemented in practice with superconducting qubits. Indeed, as demonstrated in Ref.~\cite{Kounalakis_2018}, when two qubits are connected through a tunable coupler (i.e., an ancillary qubit), adjusting the coupler flux allows one to reach operating points where the swap interaction is effectively turned off. In this case, an $N$-qubit processor would require a total of $2N$ qubits, including the ancillary couplers~\cite{Sung2021}.

\section{Keyhole geometry}

\begin{figure}[t]
    \centering
    \includegraphics[width=1.0\linewidth]{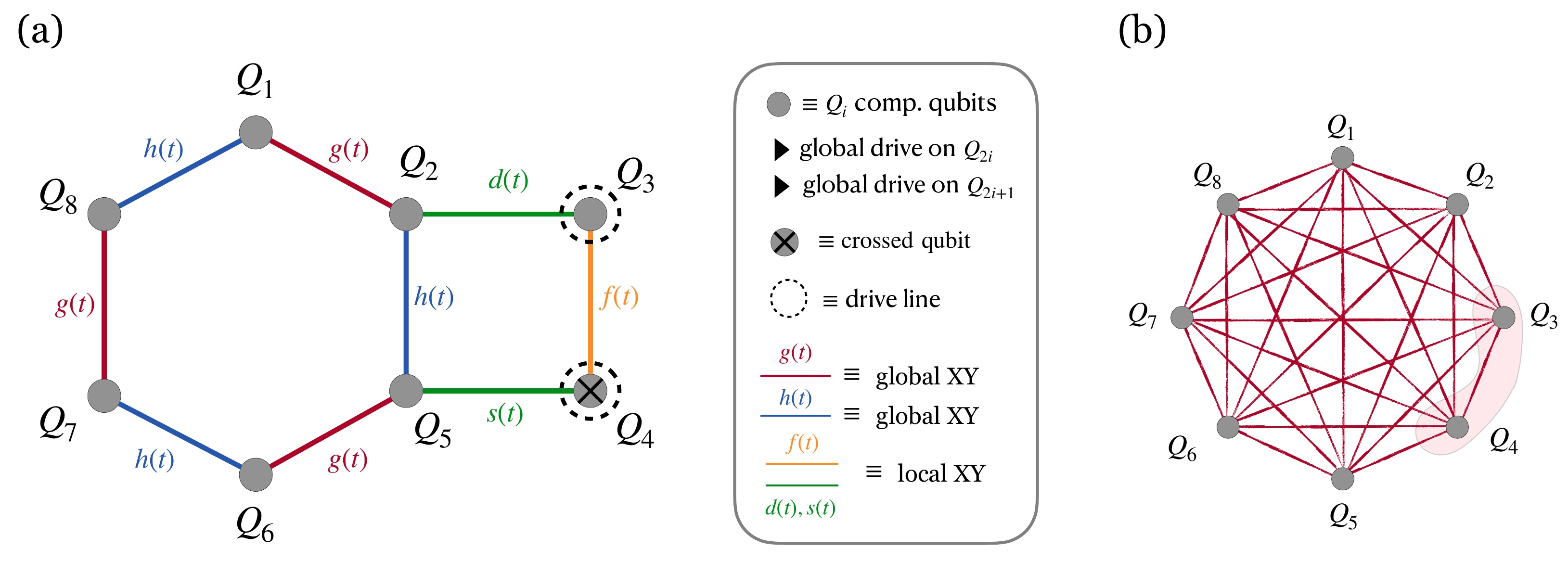}
    \caption{(a) Globally-controlled keyhole quantum computer for $N=8$ computational qubits. (b) Connectivity graphs corresponding to (a). Here, connectivity refers to all possible pairs that can be brought into the locally controlled region solely via swap dynamics, excluding local computational swaps; note that by occasionally acting on the local control region, effective all-to-all connectivity can in principle be reached even for the ring geometry.}
    \label{fig:keyhole}
\end{figure}

By deforming the closed-ring geometry into a {\it keyhole} geometry---consisting of a ring that stores the computational qubits and a short corridor that acts as an actively and locally controlled region of the processor (see Fig.~\ref{fig:keyhole}(a))---we obtain a layout suitable for implementing universal global-control schemes with full effective all-to-all connectivity, as shown in Fig.~\ref{fig:keyhole}(b). Logical operations are performed in the corridor: single-qubit rotations are applied to the two qubits at its ends through a shared drive line acting on both of them (one of the two being a crossed qubit~\cite{menta2025building, menta2026actuators}, i.e., a qubit with an enhanced Rabi frequency compared to its neighbor). 
Two-qubit iSWAP gates can also be implemented between the qubits located on the corridor’s sites thanks to the locally controlled XY interaction $f(t)$ shown in Fig.~\ref{fig:keyhole}(a). In particular, by turning off all global interactions as well as the corridor-link interactions $s(t)$ and $d(t)$, and activating only $f(t)$, one can implement the native two-qubit gate.
The locally controlled links $s(t)$ and $d(t)$ are instead used to move, whenever needed, any chosen pair of qubits from the memory ring into the two end sites of the corridor by means of SWAP operations. This relocation is enabled by the global rotational dynamics acting on the memory ring, which transports quantum information around the ring until the desired qubits reach the appropriate positions.
We stress that such a keyhole geometry can also be realized using always-on XY interactions, together with global and local control of the qubit internal frequencies, as discussed above.
We finally notice that these geometries can be easily deformed into different graph-theoretic variants that minimize hardware resources while preserving full connectivity. For instance, one may imagine designing a lollipop graph geometry, where the corridor consists of a single locally controlled link connecting a qubit on the ring to a locally driven ancillary qubit.

\section{Quantum circuit compilation}

We formalize here the routing problem for the architectures of the main text and show
constructively that the globally controlled dynamics is sufficient to bring any computational
qubit -- and, in the keyhole geometry, any \textit{pair} of qubits -- to the sites where the
required gates are available. Throughout, $N$ denotes the (even) number of ring sites, as in the
main text.

\subsection{Routing on the ring}

Label the ring sites by $0,\dots,N-1$, all indices understood modulo $N$, and split them into the
two sublattices ${\rm E}=\{0,2,\dots,N-2\}$ and ${\rm O}=\{1,3,\dots,N-1\}$, each of size $N/2$.
Local single-qubit gates are available only on a nonempty subset
$\mathcal{S}_{1{\rm QG}}\subseteq\{0,\dots,N-1\}$ of sites---a single site $\bar{i}$ in the
minimal model of Fig.~\ref{fig:ring-int} of the main text---so any logical single-qubit gate must be preceded by
transport of the target state to a site of $\mathcal{S}_{1{\rm QG}}$.

The two drive windows $\mathcal{T}_g$ and $\mathcal{T}_h$ of Eqs.~\eqref{eq:alternative-g-h} and
\eqref{eq:alternative-h-g} of the main text, once each iSWAP has been corrected into a SWAP, act on the qubit
positions as the two alternating nearest-neighbor SWAP layers
\begin{equation}
\pi_g = (0\,1)(2\,3)\cdots(N-2\;N-1),
\qquad
\pi_h = (1\,2)(3\,4)\cdots(N-1\;0),
\end{equation}
where $\pi(i)$ denotes the site to which the state stored at $i$ is moved. Their products
\begin{equation}
\pi_{\rm L} := \pi_g\circ\pi_h,
\qquad
\pi_{\rm R} := \pi_h\circ\pi_g = \pi_{\rm L}^{-1},
\end{equation}
with $\circ$ the usual composition (rightmost factor applied first), are the elementary transport
primitives. Explicitly,
\begin{equation}
\label{eq:LR_powers}
\pi_{\rm L}^{\,k}(e) \equiv e-2k,
\qquad
\pi_{\rm L}^{\,k}(o) \equiv o+2k
\qquad (\text{mod } N),
\end{equation}
for $e\in{\rm E}$ and $o\in{\rm O}$, and $\pi_{\rm R}^{\,k}$ with both signs reversed: the two
sublattices are rigidly translated in opposite directions, by one sublattice step per application.

\begin{theorem}[Single-qubit routing completeness]
\label{thm:1Q_routing}
For every pair of sites $p_0,t\in\{0,\dots,N-1\}$ there exists a finite word in $\pi_g$ and
$\pi_h$ mapping $p_0$ to $t$. In particular, any computational qubit can be brought to any site
at which local control is available.
\end{theorem}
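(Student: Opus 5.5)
The plan is to build the required word explicitly from the transport primitives $\pi_{\rm L}$ and $\pi_{\rm R}$ together with a single layer $\pi_g$ (or $\pi_h$). The first step is to verify Eq.~\eqref{eq:LR_powers}. For $e\in{\rm E}$ we have $\pi_h(e)=e-1$, since the pair $(e-1,e)$ is an $h$-bond with $e-1$ odd. Then $\pi_g(e-1)=e-2$, because $e-1$ is odd and is paired with $e-2$ in $\pi_g$. Hence $\pi_{\rm L}(e)=e-2$. Symmetrically, for $o\in{\rm O}$ we get $\pi_h(o)=o+1$ and $\pi_g(o+1)=o+2$. Since $\pi_{\rm L}$ preserves parity, induction on $k$ gives the formula for general $k$. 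It also gives $\pi_{\rm R}=\pi_{\rm L}^{-1}$, because $\pi_g$ and $\pi_h$ are involutions.

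The second step is to handle the case where $p_0$ and $t$ have the same parity. For $p_0\in{\rm E}$, choose $k\in\{0,\dots,N/2-1\}$ with $2k\equiv p_0-t \pmod N$. Such a $k$ exists because $p_0-t$ is even and $N$ is even, so the congruence reduces to $k\equiv (p_0-t)/2 \pmod{N/2}$. The word $\pi_{\rm L}^{k}$ then maps $p_0$ to $t$. For $p_0\in{\rm O}$, take $k$ with $2k\equiv t-p_0\pmod N$ instead. Alternatively, use $\pi_{\rm R}^{k}$, which avoids fussing over sign conventions.

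The third step is to reduce the opposite-parity case to the previous one. Apply a single $\pi_g$ first: it sends $p_0$ to its $g$-partner $p_0\pm1$, which has the other parity and therefore the same parity as $t$. Then append the same-parity word from the second step, giving for instance $\pi_{\rm L}^{k}\circ\pi_g$. Every word obtained this way is finite, with length at most $2(N/2-1)+1=N-1$, and it lies in the monoid generated by $\pi_g$ and $\pi_h$.

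The final claim then follows by taking $t\in\mathcal{S}_{1{\rm QG}}$, which is nonempty. Since each $\pi_g$ and $\pi_h$ is realized physically by the corrected SWAP layers of the main text, the state stored at $p_0$ is carried to $t$. I do not expect any real obstacle. The only care needed is the index bookkeeping in the first step: checking which bonds each layer pairs, and confirming that the ring closes consistently, with $(N-1\;0)\in\pi_h$. That works out because $N$ is even, so the wrap-around bond joins odd site $N-1$ to even site $0$ exactly like every other $h$-bond.
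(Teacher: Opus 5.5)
Your proposal is correct and follows essentially the paper's argument: equal-parity sites are connected by a power of $\pi_{\rm L}$ through Eq.~\eqref{eq:LR_powers}, and a single preliminary $\pi_g$ removes a parity mismatch. Beyond the paper, you derive Eq.~\eqref{eq:LR_powers} (the paper states it without derivation) and give the word-length bound $N-1$; both are accurate.
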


\begin{proof}
By Eq.~\eqref{eq:LR_powers}, $\pi_{\rm L}$ generates the full cyclic translation group of each
sublattice, so any two sites of equal parity are connected by $\pi_{\rm L}^{\,k}$ for some
$0\le k<N/2$. If $p_0$ and $t$ have opposite parity it suffices to apply $\pi_g$ (or $\pi_h$)
once beforehand, since every transposition in their cycle decomposition joins an even to an odd
site, and then to proceed as above.
\end{proof}

The proof is constructive and fixes the compilation cost. Once the parities match, let
$d:=\tfrac{1}{2}(t-p) \ ({\rm mod} \ N/2)$; the minimal number of composite primitives is
$\min\{d,\,N/2-d\}$, its sign selecting $\pi_{\rm L}$ or $\pi_{\rm R}$, with one additional layer
$\pi_g$ or $\pi_h$ when the parities differ. Since each layer is one global SWAP, i.e.\ three
global iSWAPs and three global single-qubit pulses by Eq.~\eqref{supp:iswap_to_swap_local2}, the
worst-case transport cost is $\mathcal{O}(N)$ global pulses, independent of which qubit is
addressed.

If $\Pi$ is the permutation realized by a routing word that sends the qubit of interest to
$t\in\mathcal{S}_{1{\rm QG}}$, a physical gate $U_t$ applied there implements the logical operation
\begin{equation}
U_{\rm logical} = \Pi^{\dagger}
\left(\mathbb{I}^{\otimes t}\otimes U_t\otimes\mathbb{I}^{\otimes(N-t-1)}\right)\Pi \ ,
\end{equation}
although in practice the routing need not be undone immediately: the computation may simply
proceed in the permuted frame.

As an illustration, take $N=8$ with $\mathcal{S}_{1{\rm QG}}=\{4\}$ and the state to be addressed
initially at $p_0=3$. The parities differ, so one applies $\pi_g$, which moves it to site $2$; a
single $\pi_{\rm R}$ then completes the transport, since $\pi_h(\pi_g(2))=4$.

\subsection{Pair routing and universality in the keyhole geometry}

Bringing a \textit{chosen pair} of qubits simultaneously into a locally controlled link is a
strictly stronger requirement, because the global layers translate the two sublattices rigidly
and hence shift the separation of any even--odd pair by a fixed amount at every step. In the
keyhole geometry of Fig.~\ref{fig:keyhole} this obstruction is lifted by the two ancillary ports
$a$ and $b$, attached to the ring sites $0$ and $1$ through the locally controlled links, whose
swaps
\begin{equation}
\pi_a = (0\,a), \qquad \pi_b = (1\,b)
\end{equation}
commute with the global ring motion, the ports being inert under $\pi_g$ and $\pi_h$.

\begin{theorem}[Pair routing in the keyhole]
\label{thm:2Q_routing}
For any two distinct ring sites $p\neq q$ there exists a finite sequence in
$\langle \pi_g,\pi_h,\pi_a,\pi_b\rangle$ transferring the states initially stored at $p$ and $q$
to the ports $a$ and $b$, respectively.
\end{theorem}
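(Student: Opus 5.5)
The plan is sequential: first park the state from $p$ in port $a$, then route the state from $q$ to site $1$ and swap it into port $b$. The point is that once a state sits in a port it is inert under $\pi_g$ and $\pi_h$, so later ring motion cannot disturb it.

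\emph{Step 1.} By Theorem~\ref{thm:1Q_routing} there is a word $w_1$ in $\pi_g,\pi_h$ with $w_1(p)=0$. Let $q_1:=w_1(q)$; then $q_1\neq 0$, since $w_1$ is a bijection of the ring sites. We apply $\pi_a$, which moves the state from $p$ into port $a$. The port $a$ now holds the $p$-state, and site $0$ holds whatever port $a$ previously contained, which is irrelevant filler.

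\emph{Step 2.} Again by Theorem~\ref{thm:1Q_routing}, there is a word $w_2$ with $w_2(q_1)=1$. We apply $w_2$ and then $\pi_b$, which moves the $q$-state into port $b$. Because the ports are fixed points of $\pi_g$ and $\pi_h$, and $\pi_b$ acts only on site $1$ and port $b$, the $p$-state stays in $a$ throughout.

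The full sequence is therefore $\pi_b\circ w_2\circ\pi_a\circ w_1$.

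The only subtle point is that in Step 2 we must not use $\pi_a$ again. This holds by construction, since $w_2$ is a word in the ring generators only. One should also check that the ring-only routing of Theorem~\ref{thm:1Q_routing} remains valid when site $0$ holds a junk state; it does, because that theorem concerns permutations of positions, not the contents of the sites.

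I therefore expect no real obstacle. The main care needed is simply in stating that the ports commute with and are inert under the global layers, which the statement's setup already asserts.
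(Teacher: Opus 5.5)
Your proof is correct and essentially identical to the paper's: you park the $p$-state in $a$ using Theorem~\ref{thm:1Q_routing} and $\pi_a$, then route the displaced $q$-state to site $1$ and apply $\pi_b$, giving the same composite word $\pi_b\circ w_2\circ\pi_a\circ w_1$. Your added checks make explicit what the paper leaves implicit, namely that $w_1(q)\neq 0$ by bijectivity and that port $a$ is fixed by the ring layers.
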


\begin{proof}
By Theorem~\ref{thm:1Q_routing}, route the state at $p$ to site $0$ and apply $\pi_a$, parking it
at $a$. The second state has meanwhile been displaced to some site $q'$; route it to site $1$,
again by Theorem~\ref{thm:1Q_routing}, which leaves $a$ untouched, and apply $\pi_b$. The
composite word is $\Pi^{(p,q\to a,b)} = \pi_b\circ\Pi^{(q'\to 1)}\circ\pi_a\circ\Pi^{(p\to 0)}$.
\end{proof}

The corridor supports arbitrary single-qubit rotations on $a$ and $b$ together with a native
iSWAP between them, so any two-qubit unitary can be synthesized on $(a,b)$. Combined with
Theorem~\ref{thm:2Q_routing}, and with the inverse word returning the transformed states to their
original sites, this renders the effective two-qubit connectivity of the keyhole all-to-all and
the architecture universal: every logical circuit compiles into global routing on the ring, local
operations in the corridor, and inverse routing, with all locality constraints absorbed into the
routing layer.

The same conclusion holds for the plain ring, but only once the two locally controlled links at
$\bar{i}$ are also used as transport primitives. Under the global layers alone, the pairs that
can be brought into the locally controlled region form a single cycle through all $N$ qubits:
connected, and therefore still sufficient for universality by composition along it, but not
all-to-all.

\end{document}